\def\BibTeX{{\rm B\kern-.05em{\sc i\kern-.025em b}\kern-.08em
    T\kern-.1667em\lower.7ex\hbox{E}\kern-.125emX}}
\newtheorem{theorem}{Theorem}
\begin{document}

\title{Covert and Reliable Short-Packet Communications against A Proactive Warder}

\author{\IEEEauthorblockN{Manlin~Wang\textsuperscript{\dag},~Yao~Yao\textsuperscript{\dag},~Bin~Xia\textsuperscript{\dag},~Zhiyong~Chen\textsuperscript{\dag}~and Jiangzhou~Wang\textsuperscript{\ddag}}
\IEEEauthorblockA{\textsuperscript{\dag}\textit{Department of Electronic Engineering, Shanghai Jiao Tong University, Shanghai, China} \\
\textsuperscript{\ddag}\textit{School of Engineering, University of Kent, Canterbury, U.K.}\\
Email: \textsuperscript{\dag}\{wangmanlin, sandyyao, bxia, zhiyongchen\}@sjtu.edu.cn,\textsuperscript{\ddag}j.z.wang@kent.ac.uk}

\thanks{This work is supported in part by the National Natural Science Foundation of China under Grant 62271309, and Shanghai Municipal Science and Technology Major Project under grant 2021SHZDZX0102.}

}

\maketitle

\begin{abstract}
Wireless short-packet communications pose challenges to the security and reliability of the transmission. Besides, the proactive warder compounds these challenges, who detects and interferes with the potential transmission. An extra jamming channel is introduced by the proactive warder compared with the passive one, resulting in the inapplicability of analytical methods and results in exsiting works. Thus, effective system design schemes are required for short-packet communications against the proactive warder. To address this issue, we consider the analysis and design of covert and reliable transmissions for above systems. Specifically, to investigate the reliable and covert performance of the system, detection error probability at the warder and decoding error probability at the receiver are derived, which is affected by both the transmit power and the jamming power. Furthermore, to maximize the effective throughput, an optimization framework is proposed under reliability and covertness constraints. Numerical results verify the accuracy of analytical results and the feasibility of the optimization framework. It is shown that the tradeoff between transmission reliability and covertness is changed by the proactive warder compared with the passive one. Besides, it is shown that longer blocklength is always beneficial to improve the throughput for systems with optimized transmission rates. But when transmission rates are fixed, the blocklength should be carefully designed since the maximum one is not optimal in this case.
\end{abstract}

\begin{IEEEkeywords}
covert and reliable transmission, short-packet communications, proactive warder, effective throughput
\end{IEEEkeywords}

\section{Introduction}
Time-sensitive and mission-critical Internet of Things (IoT) applications have aroused great attention in the fifth-generation mobile communications systems \cite{Intro_short}. The use of short packets meets the stringent low latency requirements, but a severe loss in coding gain exits with short packets, posing challenges to transmission reliability. Besides, massive confident messages are transmitted in wireless channels in IoT scenarios, which poses unprecedented challenges to transmission security. The exposure of transmission behaviors may bring unpredictable risks and losses in these scenarios. Notably, covert communication offers an solution for this issue, which prevents the transmission behaviors from being detected\cite{Covert}.

The fundamental work for covert communication \cite{Covert} demonstrated that $\mathcal{O}(\sqrt{n})$ bits of information can be transmitted reliably and deniably over $n$ channel use. In addition, considering the covert communication with short packets, \cite{Gaussian} investigated the effective throughput of the system in additive white Gaussian noise (AWGN) channels. Similarly, \cite{AWGN3} considered the achievability bounds on the maximal channel coding rate at a given blocklength and error probability over AWGN channels. In addition, \cite{Fading3,Nmin} investigated the throughput over quasi-static fading channels, revealing the fundamental difference in the design between the case of quasi-static fading channel and that of AWGN channel. More complex scenarios with multiple warders, multi-antenna sources and unmanned aerial vehicle aided networks were considered in \cite{ECT2,Fading4,UAV}.

The warder in the aforementioned works is passive, who aims to detect the transmission behaviours while not degrading the quality of communication channels. Different from the passive warder, the proactive one behaves more dangerously. This is because the proactive warder can not only detect the wireless transmission, but also emit noise to interfere with the potential transmission simultaneously \cite{Active1}. In \cite{Active2}, a proactive warder was considered in the relay networks, where the behaviors of the transmitter and the warder were modeled as the non-cooperative game. In addition, \cite{Active3} investigated the issues of power control in the device-to-device covert communication networks consisting of a proactive warder.

However, the analysis and corresponding system designs about the proactive warder in \cite{Active1,Active2,Active3} were based on an infinite blocklength assumption, which is no longer suitable for short-packet transmission. Besides, the results from the system with passive warders \cite{Fading3,Nmin,ECT2,Fading4,UAV} can not be directly applied to the system with the proactive warder, since another jamming link exits between the warder and the destination in addition to communication and detection links. This compounds the challenges of both reliability and covertness in short-packet communications. Therefore, effective short-packet transmission design schemes to provide both reliability and covertness guarantees are still open issues. 

To address this issue, in this paper, we consider the analysis and design of reliable and covert transmissions against a proactive warder. Specifically, to guarantee the system covertness requirement, the average detection error probability at the warder is derived. Besides, to facilitate system analysis and optimization, concise approximation expression is also proposed, which is tighter than the widely used Kullback–Leibler (KL) divergence approximation in existing works on covert communications. To guarantee the reliability requirement, the average decoding error probability at the receiver is derived. Furthermore, an optimization problem is formulated and an optimization framework is proposed to maximize the effective throughput of the system with reliability and covertness constraints by jointly designing the transmit power, transmission rate and blocklength. Numerical simulations verify the tightness of the proposed approximations and the feasibility of the proposed optimization framework for the system.

\emph{Notation}: $\left|  \cdot  \right|$ denote the absolute value operator. $\mathcal{C} \mathcal{N}(0, {\sigma^2})$ denotes the complex Gaussian distribution with zero mean and variance ${\sigma^2}$. ${\Pr}(\cdot)$ denotes the probability of an event. $\mathcal{Q}(x)=\int_x^{\infty} \frac{1}{\sqrt{2 \pi}} \exp \left(-t^2 / 2\right) d t$ denotes the Q-function. $\Gamma \left( n \right) = \left( {n - 1} \right)!$ denotes the Gamma function, and $\gamma(n, x)=\int_0^x e^{-t} t^{n-1} d t$ denotes the lower incomplete Gamma function. $\psi(x)=\frac{{d\ln \left( {\Gamma \left( x \right)} \right)}}{{dx}}$ denotes the digamma function while $\psi^{(n)}(x)$ denotes its $n$-th derivative. ${\rm{E_1}}(x) = \int_x^\infty  {\frac{{{e^{ - t}}}}{t}dt} $ denotes the exponential integral function.

\section{System Model}
\subsection{Signal and Channel Models}
\begin{figure}
	\centering
	\includegraphics[width=0.65\linewidth]{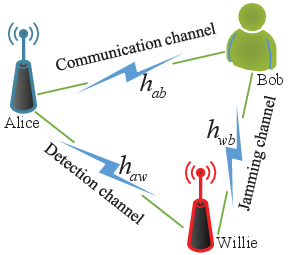}
	\caption{Covert and reliable communication system against a proactive warder.}
	\label{fig_sim}
\end{figure}
As shown in Fig. 1, a covert wireless communication scenario is considered, where the transmitter (Alice) desires to deliver messages to the receiver (Bob) while keeping a full-duplex warder (Willie) unaware of the transmission. Willie operates in full-duplex receiving signals from Alice and transmitting jamming signals to Bob simultaneously. Alice and Bob are assumed to be equipped with a single antenna, while Willie is assumed to be equipped with two antennas to support full-duplex functionality (detecting and jamming) \cite{Active3}.

In one transmission round, Alice transmits $n$ covert signals $x_a\left[ i \right],i \in \left\{ {1, \cdots,n} \right\}$ to Bob, while Willie sends $n$ jamming signals $x_w[i],i \in \left\{ {1, \cdots,n} \right\}$. Besides, Willie collects $n$ received signals to detect whether or not Alice has transmitted signals. The transmit power of Alice is denoted as $P_a$ and $x_a\left[ i \right] \sim \mathcal{CN}\left( {0,P_a} \right)$ \cite{Gaussian}. Similarly, the jamming power of Willie is denoted as $P_w$ and $x_w\left[ i \right] \sim \mathcal{CN}\left( {0,P_w} \right)$. We denote the AWGN at Bob and Willie as ${n_b}\left[ i \right] \sim \mathcal{CN}\left( {0,\sigma _b^2} \right)$ and ${n_w}\left[ i \right] \sim \mathcal{CN}\left( {0,\sigma_w^2} \right)$, where $\sigma _b^2$ and $\sigma_w^2$ are the noise variances at Bob and Willie, respectively.

The wireless channels from Alice to Bob (communication channel, $h_{ab}$), Alice to Willie (detection channel, $h_{aw}$) and Willie to Bob (jamming channel, $h_{wb}$) are subject to the quasi-static Rayleigh fading \cite{DF}. Specifically, $h_{ab} \sim \mathcal{CN}(0,\lambda_{ab})$, $h_{aw} \sim \mathcal{CN}(0,\lambda_{aw})$ and $h_{wb} \sim \mathcal{CN}(0,\lambda_{wb})$. The channel coefficients remain constant during one transmission round, and are independently and identically distributed (i.i.d.) among different rounds. The instantaneous channel state information (CSI) $h_{aw}$ is unavailable for Alice since Willie does not cooperate with Alice as an adversarial node while the statistical CSI is able to be estimated through the jamming signal \cite{statistical_CSI}. Besides, the instantaneous CSI is available for Willie from a worst case perspective for covert communication.

\subsection{Binary Hypothesis Testing at Willie}
In order to detect the presence of covert communications, Willie must distinguish between the following two hypotheses in each transmission round
\begin{equation}
	y_w[i]= \begin{cases}  \sqrt\varphi  {x_w}[i]+{n_w}[i], & \mathcal{H}_{0} \\{h}_{aw}{x_a}[i] +  \sqrt\varphi  {x_w}[i]+ {n_w}[i], & \mathcal{H}_{1} \end{cases}
\end{equation}
where $\mathcal{H}_{0}$ denotes the null hypothesis where Alice has not
transmitted, $\mathcal{H}_{1}$ denotes the alternative hypothesis
where Alice has transmitted. $y_w[i]$ is the received signal at Willie, and $\varphi \in \left[ {0,1} \right]$ is the self-interference cancellation coefficient\cite{Active2,Active3}.

With a radiometer \cite{Gaussian}, Willie makes a binary decision as
\begin{equation}
	T=\frac{1}{n} \sum_{i=1}^n\left|{y}_w[i]\right|^2 \underset{\mathcal{D}_0}{\stackrel{\mathcal{D}_1}{\gtrless}} \tau,
\end{equation}
where $T$ is the average power of each received signal at Willie, $\tau$ denotes the detection threshold, $\mathcal{D}_0$ and $\mathcal{D}_1$ denote the binary decisions that infer whether Alice transmits or not.

Suppose there is no prior knowledge for Willie about when Alice will transmit, the priori probability of either hypothesis is equal. Mathematically,
the detection error probability $\xi$ at Willie is defined as follows \cite{Nmin,Active2,Active3,DF}
\begin{equation}\label{Pd}
	\begin{aligned}
	\xi \left( \tau  \right)&={\Pr}\left(\mathcal{D}_1 \mid \mathcal{H}_0\right)+{\Pr}\left(\mathcal{D}_0 \mid \mathcal{H}_1\right) \\
	&= \Pr \left( {T > \tau |{H_0}} \right) + \Pr \left( {T < \tau |{H_1}} \right),
	\end{aligned}
\end{equation}
where $ {\Pr}\left(\mathcal{D}_1 \mid \mathcal{H}_0\right)$ denotes the false alarm probability, and $ {\Pr}\left(\mathcal{D}_0 \mid \mathcal{H}_1\right)$ denotes the missed detection probability. In covert communications, Willie’s
ultimate goal is to detect the presence of Alice’s transmission
with the minimum detection error probability $\xi^*$, which is
achieved by using the optimal detection threshold $\tau^*$ that minimizes $\xi$.

\subsection{Effective Throughput with Finite Blocklength}
When Alice transmits, the received signal at Bob can be expressed as
\begin{equation}\label{y_b}
	y_b[i]=h_{ab}x_a[i]+{h_{wb}}{x_w}[i]+n_b[i].
\end{equation}

Based on the received signal (\ref{y_b}), Bob can decode the messages. The decoding error cannot be ignored in short-packet communications, which is given by \cite{ECT2}
\begin{equation}\label{Pout}
	\delta=\mathcal{Q}\left(\frac{\ln 2 \sqrt{n}\left(\log _2\left(1+\gamma_b\right)-R\right)}{\sqrt{1-\left(\gamma_b+1\right)^{-2}}}\right),
\end{equation}
where ${\gamma _b} = {{{P_a}{{\left| {{h_{ab}}} \right|}^2}} \mathord{\left/
		{\vphantom {{{P_a}{{\left| {{h_{ab}}} \right|}^2}} {\left( {{P_w}{{\left| {{h_{wb}}} \right|}^2} + \sigma _b^2} \right)}}} \right.
		\kern-\nulldelimiterspace} {\left( {{P_w}{{\left| {{h_{wb}}} \right|}^2} + \sigma _b^2} \right)}}$ denotes the received signal to noise ratio (SNR) at Bob, and $R$ is the transmission rate measured by bits per channel use (bpcu).

Since the decoding error probability (\ref{Pout}) is affected by fading channels $h_{ab}$ and $h_{wb}$, the average decoding error probability $\overline \delta$ is adopted to evaluate the reliability performance. And the effective throughput of the system is given by\cite{ECT2}
\begin{equation}\label{obj}
	\eta  = nR\left( {1 - \overline \delta  } \right),
\end{equation}
which quantifies the expected number of information bits that can be reliably transmitted from Alice to Bob.

\section{Covertness Performance Analysis}
In this section, to analyze the covertness performance of the system, the average detection error probability is derived.

With detection threshold $\tau$, the detection error probability is expressed as \cite{Gaussian}
\begin{equation}\label{ori}
	\xi \left( \tau  \right) = 1 - \frac{{\gamma \left( {n,\frac{{n\tau }}{{{\sigma ^2}}}} \right)}}{{\Gamma \left( n \right)}} + \frac{{\gamma \left( {n,\frac{{n\tau }}{{{\sigma ^2} + {P_a}{{\left| {{h_{aw}}} \right|}^2}}}} \right)}}{{\Gamma \left( n \right)}},
\end{equation}
where ${\sigma ^2} = \varphi {P_w} + \sigma _w^2$ for expression simplification.

Since Willie knows $h_{aw}$ in each round, Willie can adjust the optimal threshold $\tau^*$ to minimize the detection error probability for each round, which is given by \cite{Gaussian}
\begin{equation}\label{opt_threshold}
	{\tau ^*} = \frac{{{\sigma ^2}\left( {{\sigma ^2} + {P_a}{{\left| {{h_{aw}}} \right|}^2}} \right)}}{{{P_a}{{\left| {{h_{aw}}} \right|}^2}}}\ln \left( {\frac{{{\sigma ^2} + {P_a}{{\left| {{h_{aw}}} \right|}^2}}}{{{\sigma ^2}}}} \right).
\end{equation}

Notably, from the perspective of Alice, only statistical CSI is available. Therefore, the average detection error probability is derived as the covertness metric \cite{Fading3}.

\begin{theorem}
	The average detection error probability at Willie with optimal detection threshold under Rayleigh fading channels can be derived as
	\begin{equation}\label{ave_Pd}
		\begin{aligned}
			&\overline {{\xi }}\left( {{\tau ^*}} \right)  =1 - \\
			 &\frac{\pi }{{B{P_a}{\lambda _{aw}}\Gamma \!\left(\! n \!\right)\!}}\sum\limits_{i = 1}^B {\!\left[\! {\gamma \!\left(\!\! {n,\frac{{n\left( {{\sigma ^2} \!+\! \tan {\theta _i}} \right)}}{{\tan {\theta _i}}}\ln \!\left(\! {\frac{{{\sigma ^2} \!+\! \tan {\theta _i}}}{{{\sigma ^2}}}} \!\right)\!} \!\right)\!} \right.}  \\
			&\left. { \!- \gamma \!\left(\! {n,\frac{{n{\sigma ^2}}}{{\tan {\theta _i}}}\ln \!\left(\! {\frac{{{\sigma ^2}_w \!+\! \tan {\theta _i}}}{{{\sigma ^2}}}} \!\right)\!} \!\right)\!} \right]\frac{{{e^{ \!- \frac{{\tan {\theta _i}}}{{{P_a}{\lambda _{aw}}}}}}\sqrt {{\theta _i}\left( {\frac{\pi }{2} \!-\! {\theta _i}} \right)} }}{{{{\cos }^2}{\theta _i}}},
		\end{aligned}
	\end{equation}
	where $B$ is the parameter of Gaussian-Chebyshev Quadrature, and ${\theta _i} = \frac{\pi }{4}\left( {1 + \cos \frac{{\left( {2i - 1} \right)\pi }}{{2B}}} \right)$.
\end{theorem}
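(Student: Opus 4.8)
The plan is to substitute the optimal threshold (8) into the per-realization detection error probability (7), average the result over the fading of $h_{aw}$, and approximate the resulting intractable integral by Gaussian-Chebyshev Quadrature.

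First I would insert (8) into (7). Writing $X=P_a|h_{aw}|^2$ for brevity, the first incomplete-Gamma argument $n\tau^*/\sigma^2$ collapses to $\frac{n(\sigma^2+X)}{X}\ln\!\left(\frac{\sigma^2+X}{\sigma^2}\right)$, and the second argument $n\tau^*/(\sigma^2+X)$ collapses to $\frac{n\sigma^2}{X}\ln\!\left(\frac{\sigma^2+X}{\sigma^2}\right)$, so that $\xi(\tau^*)$ becomes a function of the single scalar $X$. Since $h_{aw}\sim\mathcal{CN}(0,\lambda_{aw})$, the variable $X$ is exponentially distributed with mean $P_a\lambda_{aw}$ and density $\frac{1}{P_a\lambda_{aw}}e^{-x/(P_a\lambda_{aw})}$ on $[0,\infty)$. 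Averaging $\xi(\tau^*)$ against this density yields $\overline{\xi}(\tau^*)$ as $1$ minus $\frac{1}{\Gamma(n)}$ times the expectation of the difference of the two incomplete-Gamma terms identified above.

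I expect the main obstacle to be that this semi-infinite integral admits no closed form, owing to the nested logarithm inside the incomplete-Gamma arguments. To resolve this I would apply the change of variable $x=\tan\theta$, which maps $[0,\infty)$ onto $[0,\tfrac{\pi}{2})$ and contributes the Jacobian $dx=\cos^{-2}\theta\,d\theta$; this accounts for the $\cos^{-2}\theta_i$ factor appearing in (9). Rescaling via $\theta=\tfrac{\pi}{4}(1+u)$ then moves the domain to $u\in[-1,1]$, after which the $B$-point Gaussian-Chebyshev rule $\int_{-1}^1 g(u)\,du\approx\tfrac{\pi}{B}\sum_{i=1}^B g(u_i)\sqrt{1-u_i^2}$ at nodes $u_i=\cos\tfrac{(2i-1)\pi}{2B}$ applies directly.

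The closing bookkeeping is to confirm that the node mapping $\theta_i=\tfrac{\pi}{4}(1+u_i)$ gives $1-u_i^2=(1-u_i)(1+u_i)=\tfrac{16}{\pi^2}\theta_i\!\left(\tfrac{\pi}{2}-\theta_i\right)$, so that $\sqrt{1-u_i^2}=\tfrac{4}{\pi}\sqrt{\theta_i\!\left(\tfrac{\pi}{2}-\theta_i\right)}$, reproducing the stated weight. Tracking the accumulated constants from the density, the Jacobian, the $\tfrac{\pi}{4}$ rescaling, and the quadrature prefactor, namely $\frac{1}{\Gamma(n)P_a\lambda_{aw}}\cdot\frac{\pi}{4}\cdot\frac{\pi}{B}\cdot\frac{4}{\pi}$, collapses them to the prefactor $\frac{\pi}{BP_a\lambda_{aw}\Gamma(n)}$ in (9), completing the derivation.
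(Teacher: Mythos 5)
Your proposal is correct and follows essentially the same route as the paper's proof: substituting the optimal threshold (8) into (7), averaging over the exponential density of $P_a|h_{aw}|^2$, then applying the $x=\tan\theta$ substitution and Gaussian-Chebyshev quadrature. Your explicit bookkeeping of the node mapping $\theta_i=\tfrac{\pi}{4}(1+u_i)$ and the collapsing constants is in fact more detailed than the paper's own argument, which leaves those steps implicit.
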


\begin{proof}	
	By substituting (\ref{opt_threshold}) into (\ref{ori}) and considering the probability density function (PDF) of Rayleigh fading channel, the average detection error probability can be expressed as
	\begin{equation}\label{equ_proof1}
		\begin{aligned}
			&\overline {{\xi}}\left( {{\tau ^*}} \right)= 1 - \frac{1}{{{P_a}{\lambda _{aw}}\Gamma \left( n \right)}} \times\\
			 & \!\!\int\limits_0^{ \!+\! \infty } {\!\left[\! {\gamma \!\!\left(\! {n,\!\frac{{n\!\left(\! {{\sigma ^2} \!+\! x} \right)}}{x}\!\!\ln\!\! \left(\! {\frac{{{\sigma ^2} \!\!+\!\! x}}{{{\sigma ^2}}}} \!\!\right)\!\!} \!\right)\! \!-\! \gamma \!\!\left(\! {n\!,\!\frac{{n{\sigma ^2}}}{x}\!\ln \!\left(\!\! {\frac{{{\sigma ^2} \!\!+\!\! x}}{{{\sigma ^2}}}} \!\!\right)\!} \!\right)\!} \!\right]\!\!{e^{ \frac{-x}{{{P_a}\!{\lambda _{aw}}}}}}\!dx}.\!
		\end{aligned}
	\end{equation}

By substituting $x=\tan \theta$ into (\ref{equ_proof1}) and applying Gaussian-Chebyshev Quadrature into the above integral expression \cite{Quadrature}, (\ref{ave_Pd}) can be obtained, and the proof is completed.
\end{proof}

Due to the complicated form of (\ref{ave_Pd}), it is intractable to further guide the system design. Thus, a tractable lower approximation of the detection error probability in one transmission round is derived first, and then a lower approximation of the average detection error probability is derived.

\begin{theorem}
A lower approximation of the minimum detection error probability in one transmission round is given by
	\begin{equation}\label{app1}
		{\xi ^{l}(\tau^*)}\!\!=\!\! \begin{cases} \!1 \!-\! \frac{{{e^{ - n}}{n^n}}}{{\Gamma \left( n \right)}}\ln \!\left(\! {1 \!+\! \frac{{{P_a}{{\left| {{h_{aw}}} \right|}^2}}}{{{\sigma ^2}}}} \!\right)\!\!,\! & \!\frac{{{P_a}{{\left| {{h_{aw}}} \right|}^2}}}{{{\sigma ^2}}} \!<\!  {{e^{\frac{{\Gamma \left( n \right)}}{{{e^{ - n}}{n^n}}}}} \!-\! 1}  \\0,\! & \!\frac{{{P_a}{{\left| {{h_{aw}}} \right|}^2}}}{{{\sigma ^2}}} \!\ge\!  {{e^{\frac{{\Gamma \left( n \right)}}{{{e^{ - n}}{n^n}}}}} \!-\! 1} \end{cases}
	\end{equation}
\end{theorem}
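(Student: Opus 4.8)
The plan is to collapse the minimum one-round detection error probability into an integral of the lower-incomplete-Gamma integrand over a short interval, and then to bound that integrand by a quantity whose antiderivative is a logarithm. First I would substitute the optimal threshold (\ref{opt_threshold}) into (\ref{ori}). Writing $a = P_a|h_{aw}|^2/\sigma^2$, so that $\sigma^2 + P_a|h_{aw}|^2 = \sigma^2(1+a)$, the two Gamma arguments in (\ref{ori}) simplify to $u_1 = \frac{n(1+a)}{a}\ln(1+a)$ and $u_2 = \frac{n}{a}\ln(1+a)$. Two features of this pair drive the whole argument: the ratio is clean, $u_1/u_2 = 1+a$, and the ordering $u_2 < n < u_1$ holds (the left inequality because $\ln(1+a) < a$, the right because $(1+a)\ln(1+a) > a$). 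Using $\gamma(n,u_1) - \gamma(n,u_2) = \int_{u_2}^{u_1} e^{-t} t^{n-1}\,dt$, the minimum detection error probability becomes
\[
\xi(\tau^*) = 1 - \frac{1}{\Gamma(n)} \int_{u_2}^{u_1} e^{-t} t^{n-1}\, dt,
\]
so a lower bound on $\xi(\tau^*)$ is equivalent to an upper bound on this integral.

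The crux is the choice of integrand bound. The naive estimate $e^{-t}t^{n-1} \le \max_t e^{-t}t^{n-1} = e^{-(n-1)}(n-1)^{n-1}$ times the interval length $u_1 - u_2 = n\ln(1+a)$ is both loose and algebraically unpleasant. Instead I would peel off one factor of $t$ and bound $e^{-t}t^{n}$, whose maximum over $t>0$ is attained at $t=n$ with value $e^{-n}n^{n}$. This gives $e^{-t}t^{n-1} \le e^{-n}n^{n}/t$, hence
\[
\int_{u_2}^{u_1} e^{-t}t^{n-1}\,dt \le e^{-n}n^{n}\int_{u_2}^{u_1}\frac{dt}{t} = e^{-n}n^{n}\ln\frac{u_1}{u_2}.
\]
Since $u_1/u_2 = 1+a = 1 + P_a|h_{aw}|^2/\sigma^2$, the right-hand side equals $e^{-n}n^{n}\ln(1 + P_a|h_{aw}|^2/\sigma^2)$; dividing by $\Gamma(n)$ and substituting back yields the first branch of (\ref{app1}).

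For the second branch I would simply invoke $\xi(\tau^*) \ge 0$. The first-branch expression is decreasing in $a$ and reaches zero precisely when $\ln(1+a) = \Gamma(n)/(e^{-n}n^n)$, i.e. at $a = e^{\Gamma(n)/(e^{-n}n^n)} - 1$; beyond this point the estimate, being affine in $\ln(1+a)$, would turn negative, so nonnegativity of a probability supplies the binding lower bound $0$. I expect the main obstacle to be spotting the right factorization: the argument succeeds only because bounding $e^{-t}t^{n}$ (rather than $e^{-t}t^{n-1}$) leaves behind the factor $1/t$ that integrates to exactly the logarithm matching $u_1/u_2 = 1+a$. The same bound can be made transparent by the substitution $t = e^{s}$, under which the integrand becomes $e^{ns - e^{s}}$ with peak value $e^{-n}n^{n}$ at $s = \ln n$ and transformed range length $\ln(1+a)$; the ordering $u_2 < n < u_1$ then confirms that the peak lies inside the interval, so the bound is sensibly tight.
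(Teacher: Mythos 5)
Your proof is correct, and its key step is genuinely different from the paper's. Both arguments share the same setup: substituting the optimal threshold turns the minimum detection error probability into $1 - \frac{1}{\Gamma(n)}\int_{u_2}^{u_1}e^{-t}t^{n-1}\,dt$ with $u_1 = \frac{n(1+a)}{a}\ln(1+a)$ and $u_2 = \frac{n}{a}\ln(1+a)$ (the paper's $f^u$ and $f^l$), so everything reduces to the upper bound $\int_{u_2}^{u_1}e^{-t}t^{n-1}\,dt \le e^{-n}n^n\ln(1+a)$. The paper proves this inequality by a monotonicity argument: it defines $f_1(x) = \int_{f^l(x)}^{f^u(x)}e^{-t}t^{n-1}\,dt - e^{-n}n^n\ln(1+x)$, differentiates under the Leibniz rule (integral with variable limits), and reduces the sign of $f_1'$ to showing that the auxiliary function $f_2(x) = \frac{\ln(1+x)}{(1+x)^{1/x}x}$ decreases from $f_2(0^+)=1/e$, whence $f_1(x)\le f_1(0)=0$. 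You instead prove it pointwise: $e^{-t}t^{n-1} = e^{-t}t^{n}/t \le e^{-n}n^{n}/t$, and the leftover factor $1/t$ integrates to $\ln(u_1/u_2)=\ln(1+a)$ exactly. Your route is more elementary---no differentiation of a variable-limit integral, no auxiliary-function analysis---and it makes the appearance of the logarithm transparent, since it comes directly from the clean ratio $u_1/u_2 = 1+a$; it also yields the bound uniformly for all $a\ge 0$ in one stroke. What the paper's route buys is the heuristic preamble (both limits collapsing to the peak $t=n$ as $a\to 0$) that motivates the expression as a tight approximation in the high-covertness regime, an intuition you recover equivalently by observing $u_2 < n < u_1$, so the peak of the integrand lies inside the integration interval. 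Your handling of the second branch (invoking nonnegativity once the expression affine in $\ln(1+a)$ crosses zero, at exactly $a = e^{\Gamma(n)/(e^{-n}n^{n})}-1$) matches the paper's.
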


\begin{proof}
	See Appendix \ref{proof_theorem2}.
\end{proof}

The lower approximation of the minimum detection error probability of (\ref{app1}) is tighter than the approximation based on KL divergence (i.e., ${\xi ^{KL}} = 1 - \sqrt {\frac{1}{2}\mathcal{D}\left( {{\mathbb{P}_0}||{\mathbb{P}_1}} \right)}$, see Appendix \ref{proof_corollary1} for the detailed definition), which is widely used to evaluate the covertness performance in the existing works \cite{ECT2,DF}. The detailed proof is given in Appendix \ref{proof_corollary1}.

The above concise approximation facilitates the performance analysis and optimization design for the covert communication system. It can be used as a metric for the system with AWGN channels \cite{Gaussian} or the fading channels when only considering one transmission round \cite{ECT2}. Besides, it can also be adopted to analyze the average detection error probability in fading channels as follows \cite{ECT2,Nmin}.

Based on Theorem 2, the average detection error probability at Willie is derived as follows.
\begin{equation}\label{ave_Pd_app1}
	\begin{aligned}
		&\overline \xi  \left( {{\tau ^*}} \right)    \!\approx\! \int\limits_0^{{\sigma ^2}({e^{\frac{{\Gamma (n)}}{{{e^{ - n}}{n^n}}}}} \!-\! 1)} {\frac{{{e^{ \frac{-x}{{{P_a}{\lambda _{aw}}}}}}}}{{{P_a}{\lambda _{aw}}}}\!\left(\! {1 \!-\! \frac{{{e^{ - n}}{n^n}}}{{\Gamma (n)}}\ln \left( {1 \!+\! \frac{x}{{{\sigma ^2}}}} \right)} \!\right)\!dx}  \\
		& \!=\! 1 \!-\! \frac{{{e^{ - n}}{n^n}}}{{\Gamma \left( n \right)}}{e^{\frac{{{\sigma ^2}}}{{{P_a}{\lambda _{aw}}}}}}\left[ {{{\rm{E}}_{\rm{1}}}\!\left(\! {\frac{{{\sigma ^2}}}{{{P_a}{\lambda _{aw}}}}} \!\right)\! \!-\! {{\rm{E}}_{\rm{1}}}\!\left(\! {\frac{{{\sigma ^2}}}{{{P_a}{\lambda _{aw}}}}{e^{\frac{{\Gamma \left( n \right)}}{{{e^{ - n}}{n^n}}}}}} \!\right)\!} \right].
	\end{aligned}
\end{equation}

The expression of average detection error probability (\ref{ave_Pd}) and its approximation (\ref{ave_Pd_app1}) can be extended to the covert communication scenario with a passive warder by setting $P_w=0$. Besides, the lower approximation of the detection error probability in one transmission round given in (\ref{app1}) can also be extended to the scenario with a passive warder by setting $P_w=0$, which can replace the KL divergence approximation widely used in the existing works since the proposed concise approximation is tigher than the conventional one as proved in Appendix \ref{proof_corollary1}.

\section{Reliability Performance Analysis and System Design}
In this section, to analyze the reliability performance, the decoding error probability is derived. Then, the effective throughput is maximized by jointly optimizing the transmit power, the transmission rate and the blocklength, where both the covertness and the reliability requirements are considered.

\subsection{Average Decoding Error Probability at Bob}
Since both the transmit power at Alice and the jamming power at Willie affect the received SNR at Bob, considering the fading channels, the PDF of SNR can be derived as
\begin{equation}
	\begin{aligned}
		 &{f_{{\gamma _b}}}\!(t)\!\!=\!\! \frac{{d\Pr \!\left(\! {{\gamma _b} \!\!<\! t} \right)}}{{dt}} \!\!=\!\! \frac{d}{{dt}}\Pr \!\left(\! {{P_a}{{\left| {{h_{ab}}} \right|}^2} \!\!<\! t{P_w}{{\left| {{h_{wb}}} \right|}^2} \!\!+\! t\sigma _b^2} \!\right)\!\\
		& = \frac{{\sigma _b^2({P_a}{\lambda _{ab}} + {P_w}{\lambda _{wb}}t) + {P_a}{P_w}{\lambda _{ab}}{\lambda _{wb}}}}{{{{({P_a}{\lambda _{ab}} + {P_w}{\lambda _{wb}}t)}^2}}}{e^{ - \frac{{\sigma _b^2}}{{{P_a}{\lambda _{ab}}}}t}}.
	\end{aligned}
\end{equation}

Based on the linear approximation of Q-function given in \cite{ECT2} and the PDF of SNR given above, the average decoding error probability can be derived as
\begin{equation}\label{Pe}
	\begin{aligned}
		&\overline \delta  \! \approx\!\!\!\! \int\limits_0^{\alpha  - \frac{1}{{2\beta }}}\!\!\!\! {{f_{{\gamma _b}}}\left( t \right)dt}  \!+\!\!\!\!\! \int\limits_{\alpha  - \frac{1}{{2\beta }}}^{\alpha  + \frac{1}{{2\beta }}} {\!\!\left[ {\frac{1}{2} \!-\! \beta \left( {t \!-\! \alpha } \right)} \right]} {f_{{\gamma _b}}}\left( t \right)dt\\
		&\!=\! 1 \!\!-\!\! \frac{{{P_a}{\lambda _{ab}}{e^{ \!-\! \frac{{\sigma _b^2}}{{{P_a}{\lambda _{ab}}}}\left( {\alpha  - \frac{1}{{2\beta }}} \right)}}}}{{{P_a}{\lambda _{ab}} \!+\! {P_w}{\lambda _{wb}}\!\left(\! {\alpha  \!-\! \frac{1}{{2\beta }}} \!\right)\!}} \!+\! g(\alpha  \!+\! \frac{1}{{2\beta }}) \!-\! g(\alpha  \!-\! \frac{1}{{2\beta }}),
	\end{aligned}
\end{equation}
where $\alpha  \!=\! {2^R} \!-\! 1$, $\beta  \!=\! \sqrt {\frac{n}{{2\pi \left( {{4^{R}} - 1} \right)}}} $ and $g(x) = {e^{ - \frac{{\sigma _b^2x}}{{{P_a}{\lambda _{ab}}}}}}\frac{{{P_a}{\lambda _{ab}}}}{{{P_w}{\lambda _{wb}}}}\left( {\frac{{2\beta {P_w}{\lambda _{wb}}x - 2\alpha \beta {P_w}{\lambda _{wb}} - {P_w}{\lambda _{wb}}}}{{2({P_a}{\lambda _{ab}} + {P_w}{\lambda _{wb}}x)}}} \right) + {e^{\frac{{\sigma _b^2}}{{{P_w}{\lambda _{wb}}}}}}\frac{{\beta {P_a}{\lambda _{ab}}}}{{{P_w}{\lambda _{wb}}}}{{\text{E}}_1}\left( {\frac{{x\sigma _b^2}}{{{P_a}{\lambda _{ab}}}} + \frac{{\sigma _b^2}}{{{P_w}{\lambda _{wb}}}}} \right)$.

The above result can be extended to the scenario with a passive warder by replacing $P_w=0$, and the corresponding average decoding error probability can be simplified as
\begin{equation}
	\overline \delta \approx 1 - \frac{{{P_a}{\lambda _{ab}}\beta }}{{\sigma _b^2}}{e^{ - \frac{{\sigma _b^2\alpha }}{{{P_a}{\lambda _{ab}}}}}}\left( {{e^{\frac{{\sigma _b^2}}{{2{P_a}{\lambda _{ab}}\beta }}}} - {e^{ - \frac{{\sigma _b^2}}{{2{P_a}{\lambda _{ab}}\beta }}}}} \right).
\end{equation}

\subsection{Effective Throughput Maximization Optimization}
Based on the above analysis, an optimization problem can be formulated to maximize the effective throughput of the system subject to the combined constraints of covertness, reliability, blocklength, and transmit power.
\begin{align} 
	\text{(P1)}:&\max_{P_a,R,n}  \eta\\
	\text { s.t. } &\overline \xi  \left( {{\tau ^*}} \right)\ge 1-\varepsilon, \tag{16a} \label{con1}\\
	&\overline \delta  \le \kappa , \tag{16b} \label{con2}\\
	& P_a \le P_a^{max}, \tag{16c} \label{con3}\\
	& n_{min} \le n \le n_{max}, n\in \mathbb{N}^+, \tag{16d} \label{con4}
\end{align}
where (16a) and (16b) denote the covertness and reliability constraints with predetermined covertness and reliability requirements $\varepsilon$, $\kappa$, respectively. (16c) denotes the transmit power constraints at Alice with the maximum power $P_a^{max}$. Besides, (16d) denotes the blocklength constraint due to delay requirements and channel coding requirements with the maximum blocklength $n_{max}$ and the minimum blocklength $n_{min}$.

To solve this optimization problem with coupled optimization variables $P_a$, $R$ and $n$, a joint optimization framework is proposed as follows, which which involves a two-layer process. In the inner-layer, the optimal transmit power $P_a^*$ and the optimal transmission rate $R^*$ are derived with a given blocklength $n$. In the outer-layer, the optimal blocklength $n^*$ can be obtained via an exhaustive search over $\left[ {{n_{\min }},{n_{\max }}} \right]$ where $P_a^*$ and $R^*$ are calculated with each value of $n$. Finally, the globally optimal solutions $P_a^*$, $R^*$ and $n^*$ can be obtained by the above framework. The details are elaborated below.

{\textbf{Inner-layer stage:}} When $n$ is given, the optimal transmit power can be derived by $P_a^* = \min \left\{ {P_a^{\max },P_a^o} \right\}$, where ${P_a^o}$ is the solution of $\overline \xi  \left( {{\tau ^*}} \right) = 1 - \varepsilon $. This is because  $\overline \xi  \left( {{\tau ^*}} \right)$ and $\overline \delta$ decrease with $P_a$, and the effective throughput increases with $P_a$. After the optimal transmit power $P_a^*$ is obtained, we can obtain the optimal transmission rate $R^*$ as follows. It is verified that the effective throughput is a first-increasing and then-deceasing function of $R$ \cite{jammer}. Consequently, the optimal $R^o$ that maximizes $\eta$ can be effectively calculated using the bisection method. Considering the reliability constraint (\ref{con2}), the maximum transmission rate $R^{max}$ can be derived by solving $\overline \delta = \kappa $. Thus, the optimal transmission rate ${R^*} = \min \left\{ {{R^o},{R^{\max }}} \right\}$.

{\textbf{Outer-layer stage:}} Considering that the blocklength $n$ affects the constraints (\ref{con1}), (\ref{con2}) and the objective function, it is difficult to derive the optimal solutions directly. By exhaustive search on $n$ over $\left[ {{n_{\min }},{n_{\max }}} \right]$, the globally optimal solutions for (P1) and the maximum effective throughput $\eta^*$ can be obtained.

\section{Numerical Simulation}

In this section, we provide numerical results to show the covert and reliable performance of the short-packet communication system against a proactive warder. The parameter settings are as follows, unless specified otherwise: the fading parameters $\lambda_{ab}=5\times10^{-2},\lambda_{aw}=\lambda_{wb}=10^{-3}$, the AWGN variances $\sigma_b^2=\sigma_w^2=10^{-1}$ (W), the self-interference cancellation coefficient $\phi=10^{-4}$, the blocklength $n=100$, the minimum blocklength $n_{min}=50$, the maximum blocklength $n_{max}=200$, the maximum transmit power $P_a^{max}=5 (W)$ the covertness requirement $\varepsilon=10^{-1}$ and the reliability requirement $\kappa=10^{-1}$. All the simulation results shown in this paper are obtained by averaging over $10^{6}$ channel realizations.

\begin{figure}
	\centering
	\includegraphics[width=0.85\linewidth]{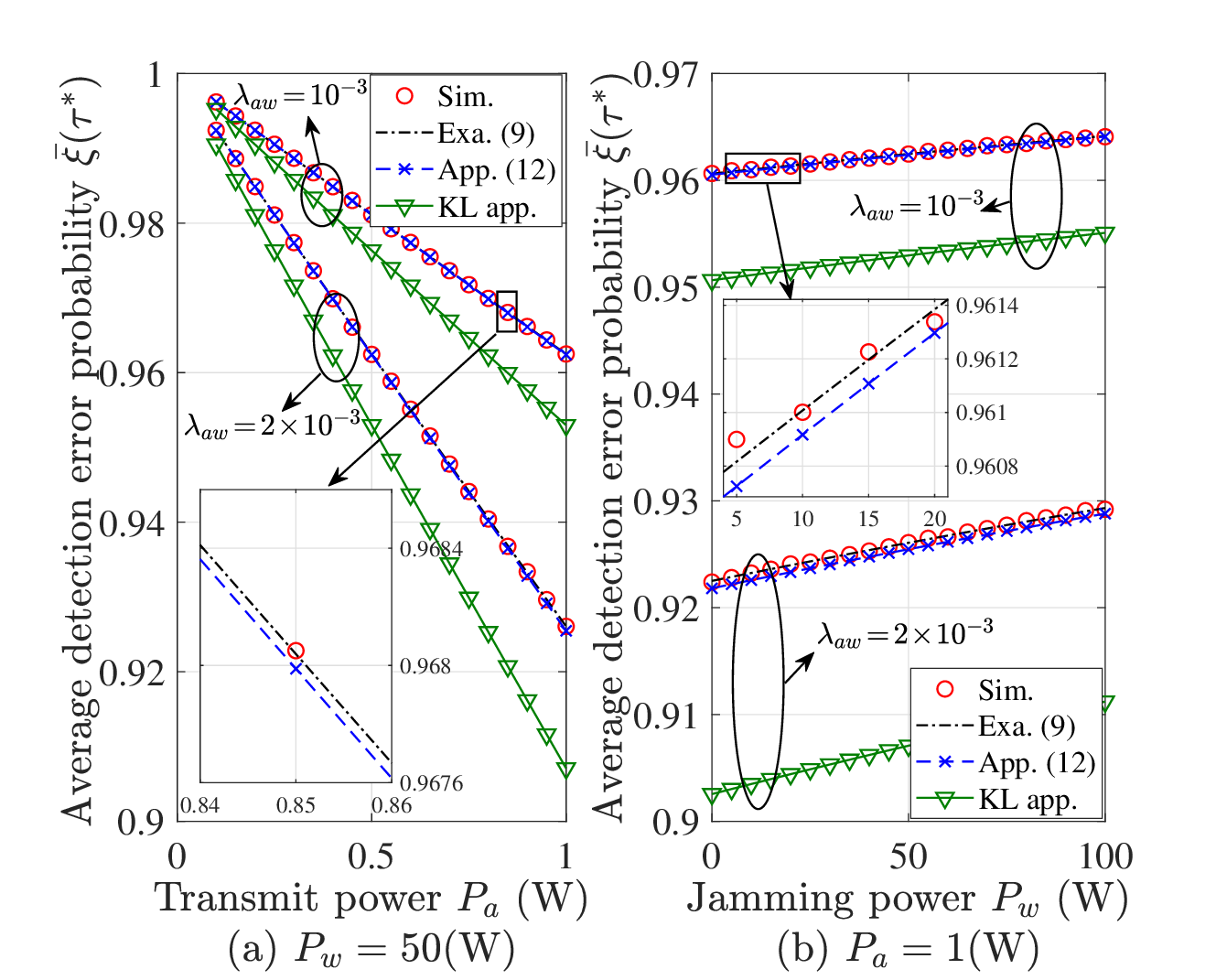}
	\caption{The detection error probability versus the transmit $/$ jamming power.}
	\label{fig_sim}
\end{figure}

In Fig.2, the impact of the transmit (jamming) power on the average detection error probability is investigated. The curves with ``Sim.'', ``Exa. (9)'', ``App. (12)'', and ``KL app.'' denote the results obtained by numerical simulations, the exact analytical expression of (\ref{ave_Pd}), the approximation (\ref{ave_Pd_app1}), and the numerical integration combined with the KL divergence of (\ref{KL_app}), respectively. It can be seen that the curves with numerical simulations, (\ref{ave_Pd}) and (\ref{ave_Pd_app1}) almost coincide. However, a significant gap exists between the curves with the KL divergence approximation and the simulation results. Besides, the average detection error probability decreases with $P_a$, and increases with $P_w$. These results validate the results given in Section III, implying that the proposed approximation expression (\ref{ave_Pd_app1}) can be adopted as a covertness performance metric to replace the widely used KL divergence metric due to its conciseness and tightness.

\begin{figure}
	\centering
	\includegraphics[width=0.85\linewidth]{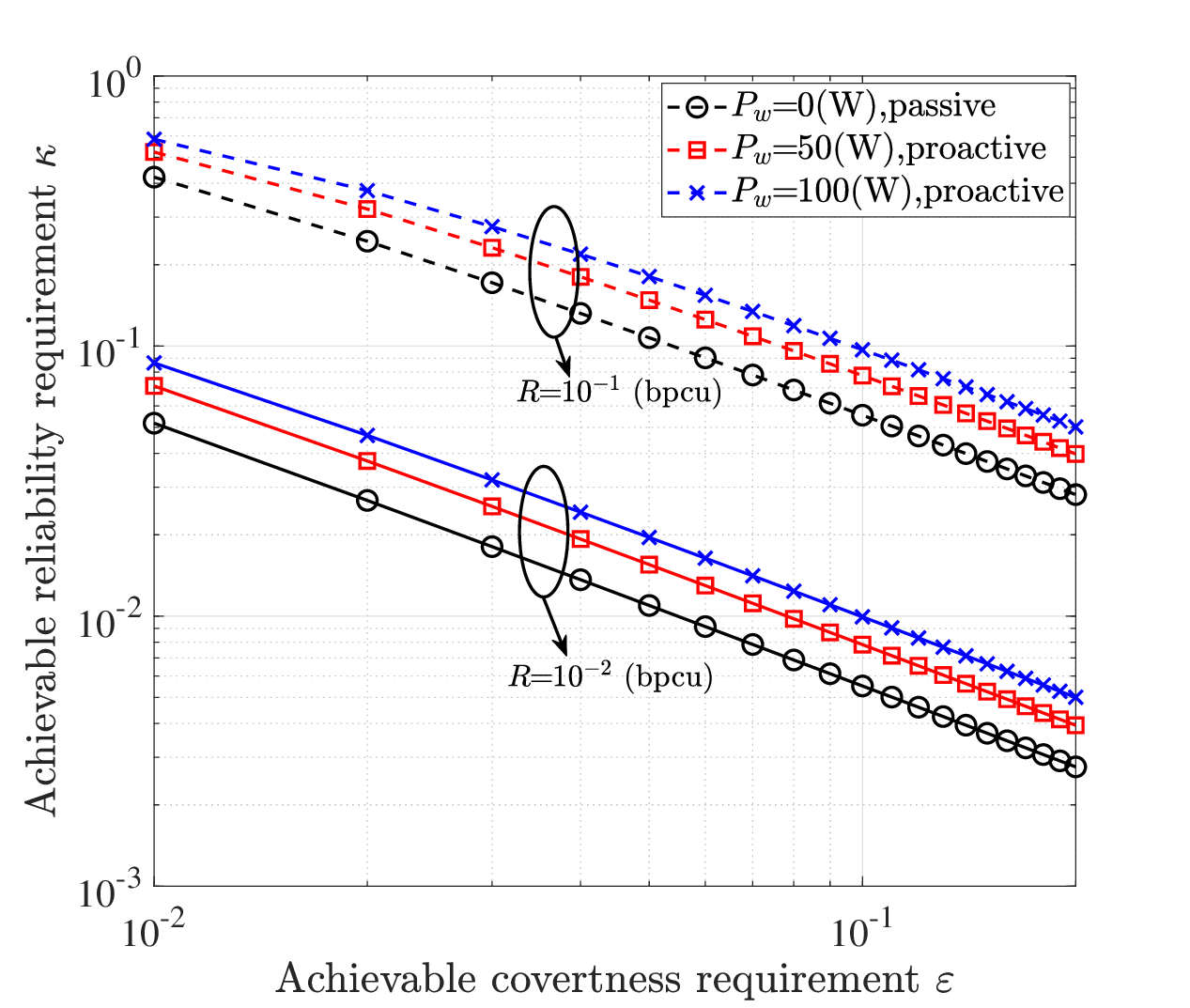}
	\caption{The achievable reliability requirement versus the achievable covertness requirement.}
	\label{fig_sim}
\end{figure}

In Fig. 3, the relationship among the achievale covertness requirement $\varepsilon$ and the achievale reliability requirement $\kappa$ is investigated. It can be seen from the figure, larger $\varepsilon$ is tolerant, and smaller $\kappa$ can be achieved. Conversely, larger $\kappa$ is tolerant, and smaller $\varepsilon$ can be achieved. Besides, it can be seen that the system performance (covertness and reliability performance) is degraded by the proactive warder $P_w=50,100$ (W) compared with the passive one $P_w=0$. These results show that the tradeoff between transmission covertness and reliability is changed by the proactive warder, and the proposed performance evalutions in Sections III and IV can be adopted to guide the system design in this case.

\begin{figure}
	\centering
	\includegraphics[width=0.85\linewidth]{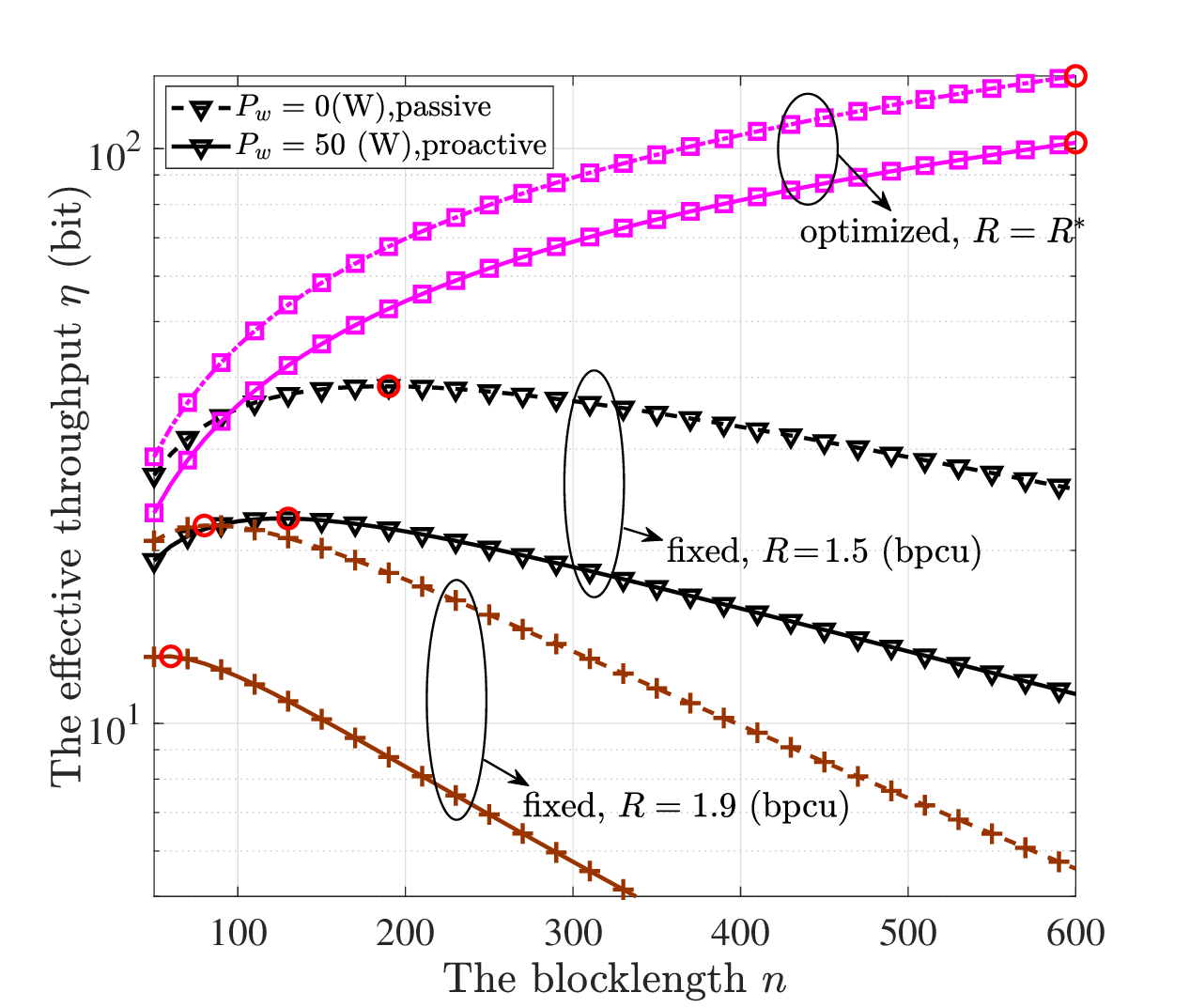}
	\caption{The effective throughput with optimized$/$fixed transmission rates versus the blocklength.}
	\label{fig_sim}
\end{figure}
In Fig.4, the impact of blocklength on the effective throughput is shown where the transmission rate is either fixed or optimized. The curves with marked solid lines and marked dotted lines denote the system performance in the system with a proactive warder and that with a passive warder, respectively. The red dots in the figure indicate the optimal blocklength that maximizes the throughput. It can be seen that in the system with fixed transmission rates, the effective throughput first increases and then decreases with $n$. This is because when $n$ is too small, $\eta$ is directly limited by the blocklength. On the contrary, when $n$ is too large, the transmit power is limited by the covertness constraint and the decoding error is too large, resulting in the reduction of effective throughput.  In addition, the effective throughput with an optimized transmission rate is always higher than that with a fixed transmission rate,  which demonstrates the feasibility of the proposed optimization framework. These results imply that for the system with optimized rates, a longer blocklength is always beneficial to improve the effective throughput. However, for the system with a fixed rate, the optimal blocklength is not necessarily the maximum one, which is critical for the system design.

\section{Conclusion}
In this paper, we investigated the reliable and covert performance of short-packet communication systems against a proactive warder. Specifically, the average detection error probability and its approximation were derived to evaluate the covertness performance. In addition, the average decoding error probability was derived to evaluate the reliability performance. Based on the analysis above, an optimization framework was proposed to maximize the effective throughput. Numerical results verified the feasibility of the proposed approximations and the optimization framework. The performance loss brought by a proactive warder was investigated compared with the passive one, and the optimal blocklength to maximize the effective throughout was elaborated with different systems.

\appendices
\section{Proof of Theorem 2}\label{proof_theorem2}
We denote ${f^u}(x) \!=\! n\left( {\frac{1}{x} \!+\! 1} \right)\ln \left( {1 \!+\! x} \right)$ , ${f^l}(x) \!=\! {\frac{n}{x}\ln \left( {1 \!+\! x} \right)}$ and $g\left( x \right) = \int_{{f^l}\left( x \right)}^{{f^u}\left( x \right)} {{e^{ - t}}{t^{n - 1}}dt} $ where $x = \frac{{{P_a}{{\left| {{h_{aw}}} \right|}^2}}}{{\sigma^2}}$. In the high covertness scenarios, $x$ always approaches zero to meet the covertness requirement, resulting in ${f^u}(x) \to n$ and ${f^l}(x) \to n$. Thus, $g\left( x \right)$ can be approximated as $	g\left( x \right)  \approx {e^{ - n}}{n^n}\left( {\frac{{{\sigma ^2} + P_a{{\left| {{h_{aw}}} \right|}^2}}}{{{\sigma ^2}}}} \right)$
and the approximation (\ref{app1}) is obtained.

Below, we prove (\ref{app1}) is a lower bound of (\ref{ori}). 

When $x \ge {e^{\frac{{\Gamma \left( n \right)}}{{{e^{ - n}}{n^n}}}}} - 1$, it holds that ${\xi ^{l}(\tau^*)} = 0 < 1 - \frac{g\left( x \right)}{{\Gamma (n)}}$.

When $0 \le x < {e^{\frac{{\Gamma (n)}}{{{e^{ - n}}{n^n}}}}} - 1$, we define the function ${f_1}(x) = g\left( x \right)  - {e^{ - n}}{n^n}\ln \left( {1 + x} \right)$ with $f_1(0) = 0$ and $\frac{{d{f_1}(x)}}{{dx}} = \frac{1}{{(x + 1)}}\left[ {{{\left( {e\frac{{\ln \left( {1 + x} \right)}}{{{{(1 + x)}^{\frac{1}{x}}}x}}} \right)}^n} - 1} \right]$.
We denote the function $f_2(x) = \frac{{\ln \left( {1 + x} \right)}}{{{{(1 + x)}^{\frac{1}{x}}}x}}$ with $f_2(0) = \frac{1}{e}$ and $\frac{{d{f_2}\left( x \right)}}{{dx}} \!=\!   \frac{{(\log (x + 1)-x)((x + 1)\log (x + 1) - x)}}{{{{\left( {x + 1} \right)}^{1/x + 1}}{x^3}}} < 0$. Thus, the first-order derivative of $f_1(x)$ is small than 0, and $f_1(x) \le f_1(0) = 0$. 

Thus, for $0 \le x < {e^{\frac{{\Gamma (n)}}{{{e^{ - n}}{n^n}}}}} - 1$, we can obtain $	{\xi ^{l}(\tau^*)} = 1 - \frac{{{e^{ - n}}{n^n}}}{{\Gamma \left( n \right)}}\ln \left( {1 + x} \right) \!<\! 1 \!-\! \frac{g\left( x \right)}{{\Gamma (n)}}$, and Theorem 2 is proved.

\section{Comparison between (11) and KL Divergence}\label{proof_corollary1}
By adopting the Pinsker’s inequality, a lower bound of minimum detection error probability is given by \cite{ECT2,DF}
\begin{equation}\label{KL_app}
	{\xi ^{KL}} = 1 - \sqrt {\frac{1}{2}\mathcal{D}\left( {{\mathbb{P}_0}||{\mathbb{P}_1}} \right)},
\end{equation}
where ${\mathbb{P}_0}$ and ${\mathbb{P}_1}$ denote the probability distributions of the observations at Willie under $\mathcal{H}_0$ and $\mathcal{H}_1$, respectively. $\mathcal{D}\left( {{\mathbb{P}_0}||{\mathbb{P}_1}} \right)$ is the KL divergence from ${\mathbb{P}_0}$ to ${\mathbb{P}_1}$ as $	\mathcal{D}\!\left( {{\mathbb{P}_0}||{\mathbb{P}_1}} \right)\! \!=\! n\!\left(\! {\ln\! \!\left(\! {1 \!+\! \frac{{{P_a}{{\left| {{h_{aw}}} \right|}^2}}}{{{\sigma ^2}}}} \!\right)\! \!+\! \frac{{{\sigma ^2}}}{{{\sigma ^2} \!+\! {P_a}{{\left| {{h_{aw}}} \right|}^2}}} \!-\! 1} \!\right)\!$.

Then, we prove that (\ref{app1}) is tighter than the KL divergence approximation, i.e., $\xi \left( {{\tau ^*}} \right) > {\xi ^{l}(\tau^*)} > {\xi ^{KL}}$.

We denote ${f_3}(x) \!\!\!=  \!\!-\! 2{e^{ - 2n}}{n^{2n - 1}}{\left( {\Gamma (n)} \right)^{ - 2}}{\ln ^2}x \!+\! \ln x \!+\! \frac{1}{x} \!\!-\!\! 1$ with $\frac{{d{f_3}(x)}}{{dx}} \!=\! \frac{1}{x}( \!-\! 2{e^{ - 2n}}{n^{2n - 1}}{\left( {\Gamma (n)} \right)^{ - 2}}\ln x \!+\! 1 \!-\! \frac{1}{x})$. In addition, we denote ${f_4}(x) =  - 2{e^{ - 2n}}{n^{2n - 1}}{\left( {\Gamma (n)} \right)^{ - 2}}\ln x + 1 - \frac{1}{x}$ with $\frac{{d{f_4}(x)}}{{dx}} = {x^{ - 2}} - 2{e^{ - 2n}}{n^{2n - 1}}{\left( {\Gamma (n)} \right)^{ - 2}}{x^{ - 1}}$, where $2{e^{ - 2n}}{n^{2n - 1}}{\left( {\Gamma (n)} \right)^{ - 2}} < 1$, proved as follows.

By denoting $M_1(n) = \frac{{\Gamma (n)}}{{\sqrt {2n} {e^{ - n}}{n^{n - 1}}}}$, we can obtain $\frac{{\partial M_1(n)}}{{\partial n}} =  - \frac{{\sqrt 2 }}{4}{e^n}{n^{ - n - 3/2}}n!\left( {2n\log (n) - 2n{\psi ^{(0)}}(n) - 1} \right) < 0$, and $M_1(1) = \frac{e}{{\sqrt 2 }} > 1$, $\mathop {\lim }\limits_{n \to \infty } M_1(n) = \sqrt \pi   + \mathcal{O}\left( {\frac{1}{n}} \right) > 1$. Thus, $2{e^{ - 2n}}{n^{2n - 1}}{\left( {\Gamma (n)} \right)^{ - 2}} \in \left( {\frac{1}{\pi },\frac{2}{{{e^2}}}} \right)$.

Therefore, $f_4(x)$ increases in $\left[ {1,\frac{1}{2}{e^{2n}}{n^{1 - 2n}}{{\left( {\Gamma (n)} \right)}^2}} \right)$ and decreases in $\left( {\frac{1}{2}{e^{2n}}{n^{1 - 2n}}{{\left( {\Gamma (n)} \right)}^2}, + \infty } \right)$. In addition, $f_4(1) = 0$ and $f_4(x)$ is larger than 0 in the interval $\left[ {1,{\kappa _1}} \right)$ and less than 0 in the interval $\left( {{\kappa _1}, + \infty } \right)$, where ${\kappa _1}$ is the solution to $ \!-\! 2{e^{ \!-\! 2n}}{n^{2n - 1}}{\left( {\Gamma (n)} \right)^{ - 2}}x\ln x \!+\! x \!\!\!=\!\!\! 1$ except 1. Futhermore, $f_3(x)$ increases in $\left[ {1,{\kappa _1}} \right)$ and decreases in $\left( {{\kappa _1}, + \infty } \right)$. When $x = {e^{\frac{{\Gamma (n)}}{{{e^{ - n}}{n^n}}}}}$, we can obtain $	{f_3}({e^{\frac{{\Gamma (n)}}{{{e^{ - n}}{n^n}}}}}) = \frac{1}{n}\left( {n\left( {\frac{{\Gamma (n)}}{{{e^{ - n}}{n^n}}} + {e^{ - \frac{{\Gamma (n)}}{{{e^{ - n}}{n^n}}}}} - 1} \right) - 2} \right)$, where the sequence ${M_2}(n) = n\left( {\frac{{\Gamma (n)}}{{{e^{ - n}}{n^n}}} + {e^{ - \frac{{\Gamma (n)}}{{{e^{ - n}}{n^n}}}}} - 1} \right)$ is monotonically increasing with $n$ and $M_2(1)=1.78<2$, $M_2(2)=2.01>2$. Therefore, $f_3(x)>0$ in $\left[ {1,{e^{\frac{{\Gamma (n)}}{{{e^{ - n}}{n^n}}}}}} \right]$ with $n \ge 2$. And by substituting $x = 1 + \frac{{P_a{{\left| {{h_{aw}}} \right|}^2}}}{{{\sigma ^2}}}$ into $f_3(x)$, and adopting the results derived above, we can obtain ${\xi ^{l}(\tau^*)}\!=\! 1 \!-\! \frac{{{e^{ - n}}{n^n}}}{{\Gamma (n)}}\ln \!\left(\! {1 \!+\! \frac{{{P_a}{{\left| {{h_{aw}}} \right|}^2}}}{{{\sigma ^2}}}} \!\right)\! \!>\! 1 \!-\! \sqrt {\frac{1}{2}\mathcal{D}\left( {{\mathbb{P}_0}||{\mathbb{P}_1}} \right)}$ for $0 < \frac{{{P_a}{{\left| {{h_{aw}}} \right|}^2}}}{{{\sigma ^2}}} \le \left( {{e^{\frac{{\Gamma \left( n \right)}}{{{e^{ - n}}{n^n}}}}} - 1} \right)$.

Besides, when $\frac{{{P_a}{{\left| {{h_{aw}}} \right|}^2}}}{{{\sigma ^2}}} > {e^{\frac{{\Gamma \left( n \right)}}{{{e^{ - n}}{n^n}}}}} - 1,{\xi ^{l}(\tau^*)} = 0 > {\xi ^{KL}}$. Note that the above results hold with the assumption $n \ge 2$, which is always true in short-packet communications.

\bibliographystyle{IEEEtran}         
\bibliography{reference}   
\vspace{12pt}
\end{document}